\renewcommand{\phi}{\varphi}
\newcommand{\hist}{{\cal H}}
\newcommand{\dd}{\textup{d}}
\newcommand{\E}{\mathbb{E}}
\newcommand{\p}{\mathbb{P}}
\newcommand{\Exp}{\textup{Exp}}
\newtheorem{proposition}{\noindent Proposition}[section]
\theoremstyle{definition}
\newtheorem{example}{\noindent Example}[section]
\newtheorem{algorithm}{\noindent Algorithm}[section]
\title{Lecture Notes: \\Temporal Point Processes\\ and the Conditional Intensity Function}
\author{Jakob Gulddahl Rasmussen\\ Department of Mathematical Sciences\\ Aalborg University\\ Denmark\\ jgr@math.aau.dk}
\begin{document}

\maketitle

\abstract{These short lecture notes contain a not too technical introduction to point processes on the time line. The focus lies on defining these processes using the conditional intensity function. Furthermore, likelihood inference, methods of simulation and residual analysis for temporal point processes specified by a conditional intensity function are considered.}

\newpage
\tableofcontents
\newpage

\section{Introduction}

A temporal point pattern is basically a list of times of events.  Many
real phenomena produce data that can be represented as a temporal
point pattern; the left column of Table~\ref{tab.examples} shows a few
examples. Common to these examples is that we do not know how many
events will occur, or at what times they will occur. Usually complex
mechanisms are behind these seemingly random times, for example
earthquakes cause new earthquakes in the form of aftershocks. An
essential tool for dealing with these mechanisms, for example in
predicting future events, is a stochastic process modelling the point
patterns: a {\em temporal point process}. The term point is used since
we may think of an event as being instant and thus can represent it as
a point on the time line. For the same reason the words point and
event will be used interchangeably throughout this note.

\begin{table}
  \begin{center}
    \begin{tabular}{|l|l|}
      \hline
      Events & Marks\\
      \hline
      Earthquakes & Magnitudes \\
      & Locations \\
      \hline
      Arrivals at a server & Service time\\
      \hline
      Accidents & Insurance claims\\
      & Type of Injury\\
      \hline
    \end{tabular}
  \end{center}
  \label{tab.examples}
  \caption{Examples of events and marks.}
\end{table}

Often there is more information available associated with an
event. This information is known as marks. Examples are given in the
right column of Table~\ref{tab.examples}. The marks may be of separate
interest or may simply be included to make a more realistic model of
the event times. For example, it is of practical relevance to know
the position and magnitude of an earthquake, not just its time. At the
same time, the magnitude of an earthquake also influences how many
aftershocks there will be, so a model not including magnitudes as
marks may not be reliable at modelling the event times either.

In this note, familiarity with the Poisson process on the line as well
as basic probability theory and statistics is assumed. On the other
hand, measure theory is not assumed; for a much more thorough
treatment with all the measure theoretical details, see
\cite{daley-vere-jones-03} and \cite{daley-vere-jones-08}.

\section{Evolutionary point processes}

There are many ways of treating (marked) temporal point processes. In
this note we will explore one approach based on the so-called
conditional intensity function. To understand what this is, we first
have to understand the concept of evolutionarity.

\subsection{Evolutionarity}

Usually we think of time as having an {\em evolutionary character}:
what happens now may depend on what happened in the past, but not on
what is going to happen in the future. This order of time is also a
natural starting point for defining practically useful temporal point
processes. Roughly speaking, we can define a point process by
specifying a stochastic model for the time of the next event given we
know all the times of previous events. The term {\em evolutionary
  point process} is used for processes defined in this way.

The past in a point process is captured by the concept of the {\em
  history} of the process. If we consider the time $t$, then the
history $\hist_{t-}$ is the knowledge of times of all events, say
$(\ldots,t_1,t_2,\ldots,t_n)$, up to but not including time $t$;
$\hist_t$ also includes the information whether there is an event at
time $t$. Note that theoretically the point process may extend
infinitely far back in time, but it does not have to do this. Note
also that we assume that we have a {\em simple point process}, i.e.\ a
point process where no points coincide, such that the points can be
strictly ordered in time.


\subsection{Interevent times}

When specifying a temporal point process we can use many different
approaches. In this note, we start by specifying the
distribution of the time lengths between subsequent events, and then in the next section we reformulate this in terms of conditional intensity functions.

The lengths of the time intervals between subsequent events are known
as {\em interevent times}. We can define a temporal point process by
specifying the distributions of these. Let $f(t_{n+1}|\hist_{t_n})$ be
the conditional density function of the time of the next event
$t_{n+1}$ given the history of previous events $(\ldots,t_{n-1},t_n)$.
Note that the density functions $f(t_n|\ldots,t_{n-2},t_{n-1})$
specify the distributions of all interevent times,
one by one, starting in the past, and thus the distribution of all
events is given by the joint density
\[
f(\ldots,t_1,t_2,\ldots) = \prod_n f(t_n|\ldots,t_{n-2},t_{n-1}) =
\prod_n f(t_n|\hist_{t_{n-1}})
\] 
in the same manner as the joint density for a bivariate random
variable factorises into $p(x,y) = p(x) p(y|x)$. Let us consider a
simple example of a point process defined by specifying the density
function for interevent times:

\begin{example}[Renewal process and Wold process]\label{ex.ren}
  The simplest process we can define by specifying the distribution of
  the interevent times is the renewal process. This process is defined
  by letting the interevent times be i.i.d.\ stochastic variables,
  i.e.\ $f(t_n|\hist_{t_{n-1}})=g(t_n-t_{n-1})$ where $g$ is a density
  function for a distribution on $(0,\infty)$. An important special case of this is the homogeneous Poisson process with intensity $\lambda$, where $g$ is the density of the exponential distribution with inverse mean $\lambda$. Figure~\ref{fig-renewal-processes} shows simulations of
  three different renewal processes: one is the homogeneous Poisson process, one is more {\em clustered} than the Poisson process (i.e.\ the points tend to occur in clusters), and one is more {\em regular} than the Poisson process (i.e.\ the points tend to be more evenly spread out).
  \begin{figure}[htbp]
    \begin{center}
      \leavevmode
      \includegraphics[height=3cm]{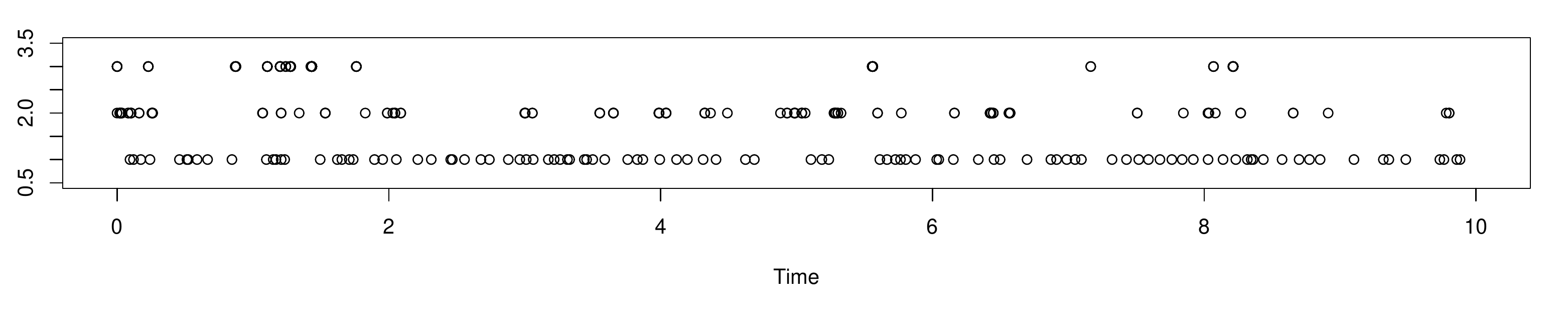}
      \caption{Three simulations of renewal processes with different
        interevent time distributions: Gamma(0.02,0.2) (upper),
        Gamma(0.1,1) (middle), Gamma(2,20) (lower). Note how the upper
        case is clustered and the lower case is regular compared to
        the middle case (which is a Poisson process). Also note that
        all the simulations have roughly 100 points for easy
        comparison (they are very densely packed together for the
        upper case).}
      \label{fig-renewal-processes}
    \end{center}
  \end{figure}
\end{example}

\subsection{Conditional intensity function}\label{sec.cif}

Example~\ref{ex.ren} show cases where $t_n$ depends only on
$t_{n-1}$. However, in general it may depend on the whole history, and it turns out that the density function of the interevent times is not the best way of specifying the general case. Instead the conditional intensity function is a more convenient and intuitive way
of specifying how the present depends on the past in an evolutionary
point process. Consider the conditional density $f(t|\hist_{t_n})$ and
its corresponding cumulative distribution function $F(t|\hist_{t_n})$
for any $t>t_n$. Then the {\em conditional intensity function} (or
hazard function) is defined by
\begin{equation}\label{eq.int}
  \lambda^*(t) = \frac{f(t|\hist_{t_n})}{1-F(t|\hist_{t_n})}.
\end{equation}
The conditional intensity function can be interpreted heuristically in
the following way: consider an infinitisemal interval around $t$, say
$\dd t$, then
\begin{eqnarray*}
\lambda^*(t)\dd t
&=& \frac{f(t|\hist_{t_n})\dd t}{1-F(t|\hist_{t_n})}\\
&=& \frac{\p(t_{n+1}\in[t,t+\dd t]|\hist_{t_n})}
    {\p(t_{n+1}\notin(t_n,t)|\hist_{t_n})}\\
&=& \frac{\p(t_{n+1}\in[t,t+\dd t],t_{n+1}\notin(t_n,t)|\hist_{t_n})}
    {\p(t_{n+1}\notin(t_n,t)|\hist_{t_n})}\\
&=& \p(t_{n+1}\in[t,t+\dd t]|t_{n+1}\notin(t_n,t),\hist_{t_n})\\
&=& \p(t_{n+1}\in[t,t+\dd t]|\hist_{t-})\\
&=& \E[N([t,t+\dd t])|\hist_{t-}],
\end{eqnarray*}
where $N(A)$ denotes the number of points falling in an interval $A$,
and the last equality follows from the assumption that no points
coincide, so that there is either zero or one point in an
infinitisemal interval. In other words, the conditional intensity
function specifies the mean number of events in a region conditional
on the past. Here we use the notation $*$ from
\cite{daley-vere-jones-03} to remind ourselves that this density is
conditional on the past right up to but not including the present,
rather than writing explicitly that the function depends on the
history.

We consider a few examples of point processes where the conditional
intensity has particular functional forms:

\begin{example}[Poisson process]
  The (inhomogeneous) Poisson process is among other things
  characterised by the number of points in disjoint sets being
  independent. The conditional intensity function inherets this
  independence. The Poisson process is quite simply the point process
  where the conditional intensity function is independent of the past,
  i.e. the conditional intensity function is equal to the intensity
  function of the Poisson process, $\lambda^*(t) = \lambda(t)$.
\end{example}

\begin{example}[Hawkes process]\label{ex.haw}
  Define a point process by the conditional intensity function
  \begin{equation}\label{eq.hawexp}
  \lambda^*(t) = \mu + \alpha\sum_{t_i<t}\exp(-(t-t_i)),
  \end{equation}
  where $\mu$ and $\alpha$ are positive parameters. Note that each
  time a new point arrives in this process, the conditional intensity
  grows by $\alpha$ and then decreases exponentially back towards
  $\mu$. In other words, a point increases the chance of getting other
  points immediately after, and thus this is model for clustered point
  patterns. A simulation of the process with parameters $(\mu,\alpha)
  = (0.5,0.9)$ is shown in Figure~\ref{fig-hawkes-process} together
  with its conditional intensity function (in Section~\ref{sec.sim} we
  will learn how to make such a simulation). The so-called Hawkes
  process is a generalization of this process and has the conditional
  intensity function
  \begin{equation*}
  \lambda^*(t) = \mu(t) + \alpha\sum_{t_i<t}\gamma(t-t_i;\beta),
  \end{equation*}
  where $\mu(t)\geq0$, $\alpha>0$, and $\gamma(t;\beta)$ is a density
  on $(0,\infty)$ depending on some parameter $\beta$ (which may be a single value or a vector, depending on the choice of distribution). For more on the
  Hawkes process, see
  e.g. \cite{hawkes-71a,hawkes-71b,hawkes-72,hawkes-oakes-74}.
  \begin{figure}[htbp]
    \begin{center}
      \leavevmode
      \includegraphics[height=6cm]{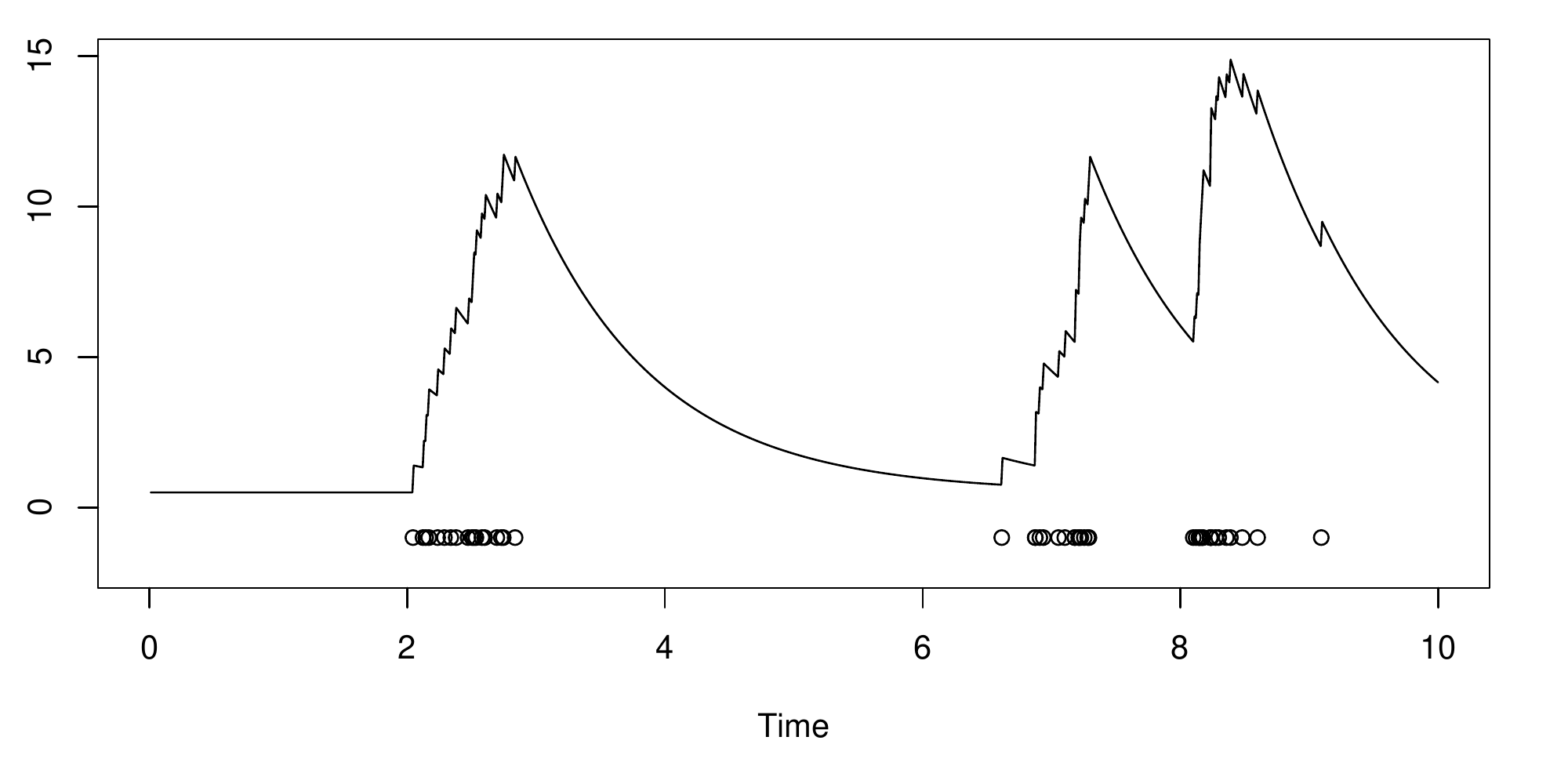}
      \caption{A simulation of the Hawkes process is shown at the
        bottom of this plot, and the corresponding conditional
        intensity function is shown in the top. Note that the point
        pattern is clustered.}
      \label{fig-hawkes-process}
    \end{center}
  \end{figure}
\end{example}

\begin{example}[Self-correcting process]\label{ex.inhib}
  What do we do if we want a point process for regular point patterns?
  Exchanging the plus for a minus in the Hawkes process will not work,
  since a conditional intensity function has to be non-negative. We can
  instead use
  \[
  \lambda^*(t) = \exp\left(\mu t - \sum_{t_i<t}\alpha\right),
  \]
  where $\mu$ and $\alpha$ are positive parameters. Now the intensity
  rises as time passes, but each time a new point appears we multiply
  by a constant $e^{-\alpha}<1$, and thus the chance of new points
  decreases immediately after a point has appeared; in other words,
  this is a regular point process. A simulated point pattern and the
  conditional intensity function is shown in
  Figure~\ref{fig-selfcorr}. This process is a special case of the
  so-called self-correcting process \citep{isham-westcott-79}.
  \begin{figure}[htbp]
    \begin{center}
      \leavevmode
      \includegraphics[height=6cm]{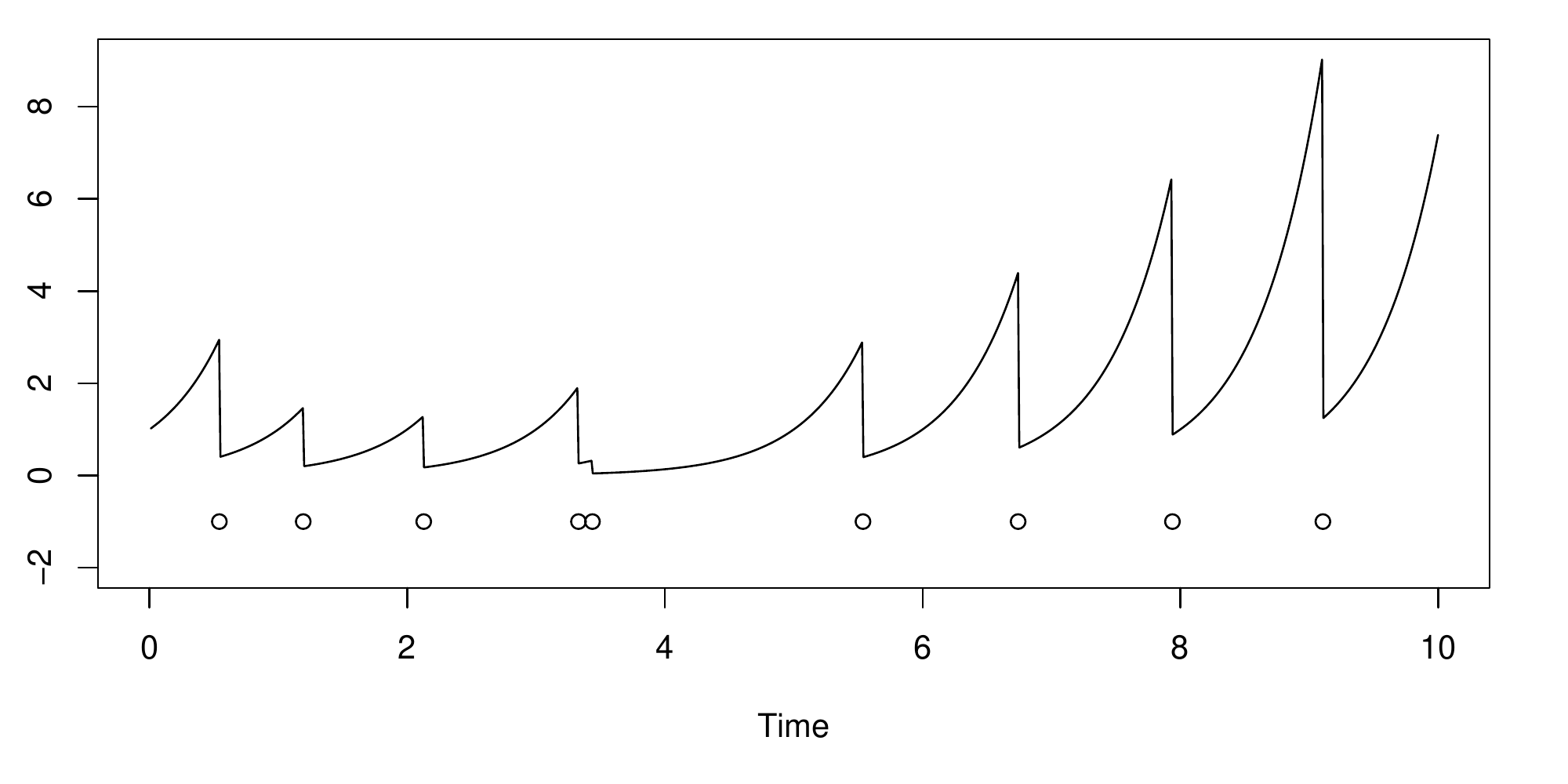}
      \caption{A simulation of a self-correcting process is shown at the
        bottom of this plot, and the corresponding conditional
        intensity function is shown in the top. Note that the point
        pattern is regular.}
      \label{fig-selfcorr}
    \end{center}
  \end{figure}
\end{example}

Note that the models in examples \ref{ex.haw} and \ref{ex.inhib} are
specified simply by choosing a particular form of the conditional
intensity and interpreting this. A little creativity and common sense
can be used to define many new models using the conditional intensity
function. This, of course, depends on the fact that the conditional
intensity function uniquely defines a point process. To prove this we
first need to note that the definition of the conditional intensity
function can also be reversed such that an expresion for the density
or cumulative distribution function of the interevent times can be
obtained:
\begin{proposition}\label{prop.fstar}
  The reverse relation of (\ref{eq.int}) is given by
  \begin{equation}\label{eq.fstar}
    f(t|\hist_{t_n})=\lambda^*(t)\exp\left(-\int_{t_{n}}^t\lambda^*(s)\dd s\right),
  \end{equation}
  or
  \begin{equation}\label{eq.Fstar}
    F(t|\hist_{t_n}) = 1-\exp\left(-\int_{t_{n}}^t\lambda^*(s) \dd s\right),
  \end{equation}
  where $t_n$ is the last point before $t$.
\end{proposition}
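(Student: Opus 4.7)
The plan is to recognize that equation~(\ref{eq.int}) is a first-order ODE for the survival function $S(t) := 1 - F(t|\hist_{t_n})$ and solve it by separation of variables, using a natural initial condition at $t = t_n$.

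First I would rewrite $\lambda^*(t)$ in a more suggestive form. Since $\frac{d}{dt}F(t|\hist_{t_n}) = f(t|\hist_{t_n})$, we have $\frac{d}{dt}S(t) = -f(t|\hist_{t_n})$, so (\ref{eq.int}) becomes
\[
\lambda^*(t) = -\frac{S'(t)}{S(t)} = -\frac{d}{dt}\log S(t).
\]
Next I would integrate this identity from $t_n$ to $t$:
\[
\int_{t_n}^{t}\lambda^*(s)\,\dd s = -\log S(t) + \log S(t_n).
\]
The key observation to finish (\ref{eq.Fstar}) is that $S(t_n) = 1$, i.e.\ $F(t_n|\hist_{t_n}) = 0$: conditional on the history $\hist_{t_n}$ we know that $t_{n+1} > t_n$, so the probability that the next event lies in $(t_n, t_n]$ vanishes. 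Exponentiating then gives
\[
S(t) = \exp\!\left(-\int_{t_n}^{t}\lambda^*(s)\,\dd s\right),
\]
which is precisely (\ref{eq.Fstar}). Finally, (\ref{eq.fstar}) follows by multiplying both sides of (\ref{eq.int}) by $S(t)$ and substituting the expression just derived for $S(t)$.

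The computation is essentially the standard derivation of the survival function from a hazard rate, so there is no genuine obstacle. The only subtle point — and the step I would take care to justify explicitly — is the boundary value $F(t_n|\hist_{t_n}) = 0$, which relies on the convention that $\hist_{t_n}$ already encodes that $t_n$ is an event and that the process is simple, so $t_{n+1}$ is strictly greater than $t_n$ almost surely. A minor regularity caveat is that (\ref{eq.int}) implicitly assumes $S(t) > 0$ on $(t_n, t)$, so that dividing by $S$ and taking logarithms is legitimate; this is automatic on the interval where $F(\cdot|\hist_{t_n}) < 1$, which is the only range in which $f(t|\hist_{t_n})$ needs to be recovered.
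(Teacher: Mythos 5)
Your proposal is correct and follows essentially the same route as the paper: rewrite (\ref{eq.int}) as $\lambda^*(t) = -\frac{\dd}{\dd t}\log(1-F(t|\hist_{t_n}))$, integrate from $t_n$ to $t$ using $F(t_n|\hist_{t_n})=0$ (simplicity of the process), and exponentiate to obtain (\ref{eq.Fstar}). The only cosmetic difference is that the paper recovers (\ref{eq.fstar}) by differentiating $F$ rather than by multiplying (\ref{eq.int}) by the survival function, which amounts to the same thing.
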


\begin{proof}
  By (\ref{eq.int}), we get that
  \begin{equation}\label{eq.l1F}
  \lambda^*(t) = \frac{f(t|\hist_{t_n})}{1-F(t|\hist_{t_n})} 
  = \frac{\frac{\dd}{\dd t}F(t|\hist_{t_n})}{1-F(t|\hist_{t_n})}
  = -\frac{\dd}{\dd t}\log(1-F(t|\hist_{t_n})).
  \end{equation}
  Integrating both sides, we get by the fundamental theorem of calculus
  that 
  \[
  \int_{t_n}^t\lambda^*(s) \dd s = -(\log(1-F(t|\hist_{t_n})) -
  \log(1-F(t_n|\hist_{t_n}))) = -\log(1-F(t|\hist_{t_n})),
  \]
  since $F(t_n|\hist_{t_n})=0$ (point $t_{n+1} = t_{n}$ with
  probability zero, since the point process is simple).  Isolating
  $F(t|\hist_{t_n})$ we get (\ref{eq.Fstar}), and (\ref{eq.fstar})
  then follows by differentiating $F(t|\hist_{t_n})$ with respect to
  $t$, again using the fundamental theorem of calculus.
\end{proof}

\begin{proposition}\label{prop.defuni}
  A conditional intensity function $\lambda^*(t)$ uniquely defines a
  point process if it satisfies the following conditions for any point pattern $(\ldots,t_1,\ldots,t_n)$ and any $t>t_n$:
  \begin{enumerate}
  \item $\lambda^*(t)$ is non-negative and integrable on any interval starting at $t_n$, and
  \item $\int_{t_{n}}^t\lambda^*(s)\dd
    s\rightarrow\infty$ for $t\rightarrow\infty$.
  \end{enumerate}
\end{proposition}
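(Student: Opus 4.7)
The plan is to argue that $\lambda^*$ determines, via Proposition~\ref{prop.fstar}, the full family of interevent time distributions, and then use the factorisation into conditional densities already noted in the text to build up the joint distribution of the process. Uniqueness and existence will be established in parallel, with the two regularity conditions doing exactly the work needed to turn formula~(\ref{eq.Fstar}) into a bona fide cumulative distribution function.

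First I would fix a history $\hist_{t_n}$ and define a candidate CDF for the next event $t_{n+1}$ by
\[
  F(t\mid\hist_{t_n}) \;=\; 1-\exp\!\left(-\int_{t_n}^t \lambda^*(s)\,\dd s\right),\qquad t>t_n,
\]
as forced by Proposition~\ref{prop.fstar}. I would then check the three properties of a proper CDF on $(t_n,\infty)$: that $F(t_n^+\mid\hist_{t_n})=0$ (immediate from the formula), that $F$ is non-decreasing and continuous in $t$ (which uses condition~1, since non-negativity of $\lambda^*$ makes the exponent non-increasing and integrability makes the integral a continuous function of its upper limit), and crucially that $F(t\mid\hist_{t_n})\to 1$ as $t\to\infty$. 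This last limit is exactly where condition~2 is needed: without $\int_{t_n}^t \lambda^*(s)\,\dd s\to\infty$ there would be positive probability of no further event occurring, and the process would not be well defined as a sequence of event times. Differentiating $F$ then produces a valid conditional density $f(\cdot\mid\hist_{t_n})$, and by construction this density returns $\lambda^*$ via (\ref{eq.int}).

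With the interevent time distributions in hand, I would invoke the factorisation
\[
  f(\ldots,t_1,t_2,\ldots)\;=\;\prod_n f(t_n\mid\hist_{t_{n-1}})
\]
stated earlier in the note to assemble the joint law of the point process. Existence follows because each factor is a legitimate conditional density on $(t_{n-1},\infty)$. For uniqueness, suppose two simple point processes share the same $\lambda^*$; then Proposition~\ref{prop.fstar} forces them to share every conditional density $f(t_n\mid\hist_{t_{n-1}})$, and hence the same joint density, so they have the same distribution.

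The main obstacle I expect is the $F(t\mid\hist_{t_n})\to 1$ step: one has to be careful that condition~2, as stated for each fixed history, really rules out defective interevent distributions uniformly and thereby prevents the constructed process from having a positive probability of terminating. The rest of the argument is essentially bookkeeping built on Proposition~\ref{prop.fstar} and the product representation of the joint density.
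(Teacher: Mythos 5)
Your proposal is correct and follows essentially the same route as the paper's own proof: define $F(t\mid\hist_{t_n})$ from $\lambda^*$ via (\ref{eq.Fstar}), verify that conditions 1 and 2 make it a genuine cumulative distribution function (non-negativity gives monotonicity and the range $[0,1]$, divergence of the integral gives $F\to1$), and obtain uniqueness because (\ref{eq.Fstar}) determines every conditional interevent distribution and hence, by the product factorisation, the joint law. Your added explicitness about the factorisation step and the role of condition 2 is only a more detailed writing-out of what the paper's terser proof already does.
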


\begin{proof}
  The distribution of the point process is well-defined, if all
  interevent times have well-defined densities, i.e.\
  $f(t|\hist_{t_n})$ should be a density function on $t\in[t_n,\infty)$, or
  equivalently $F(t|\hist_{t_n})$ should be a cumulative distribution
  function. From the assumptions and (\ref{eq.Fstar}) it follows
  that
  \begin{itemize}
  \item $0 \leq F(t|\hist_{t_n}) \leq 1$,
  \item $F(t|\hist_{t_n})$ is a non-decreasing function of $t$,
  \item $F(t|\hist_{t_n})\rightarrow1$ for $t\rightarrow\infty$,
  \end{itemize}
  which means that $F(t|\hist_{t_n})$ is a distribution function. Uniqueness follows from Proposition~\ref{prop.fstar},
  since $F(t|\hist_{t_n})$ is uniquely obtained from $\lambda^*(t)$ using
  (\ref{eq.Fstar}).
\end{proof}

Note that item 2.\ in Proposition~\ref{prop.defuni} implies that the
point process continues forever, a property which is often not
desireable for practical use - luckily we can get rid of this
assumption. If we remove this, the proof still holds except that item
2.\ in the proof has to be removed. Now $F(t|\hist_{t_n})\rightarrow p$ for some
probability $p<1$, so we have to understand what it means when the
cumulative distribution function for the interevent time does not tend
to one when time tends to infinity. Basically this means that there is
only probability $p$ of having one (or more) points in the rest of the
process, and with probability $1-p$ the process terminates with no
more points.

\begin{example}[Two terminating point processes]
  Consider a unit-rate Poisson process on $[0,1]$. This
  has conditional intensity function
  $\lambda^*(t)={\bf1}[t\in[0,1]]$. Thus starting at zero (with no
  points so far), we get that
  \[
  F(t|\hist_0) = 1 - \exp\left(-\int_0^t{\bf1}[s\in[0,1]]\dd s\right)
  = 1 - \exp\left(-\min\{t,1\}\right),
  \]
  where ${\bf1}[\cdot]$ denotes the indicator function. For $t>1$,
  this equals $1-\exp(-1)\approx0.63$, so there is a probability of
  about $0.37$ of having no points at all. If we do get a point, say $t_1$, there
  is an even smaller chance of getting another point in the remaining interval $(t_1,1]$. Another terminating unit-rate process could be a process that
  behaves like a Poisson process but stops after $n$ points. In this
  case
  \[
  F(t|\hist_{t_i}) = (1 - \exp(-t)) {\bf1}[i<n].
  \]
  Both these examples illustrate that assumption 3. in
  Proposition~\ref{prop.defuni} is not necessary to get well-defined
  point processes.
\end{example}

\subsection{The marked case}

The conditional intensity function also generalises to the marked
case, but before we get that far it is worth reminding ourselves that
the mark space $\mathbb{M}$ can be many different types of spaces,
often (a subset of) $\mathbb{R}$ or $\mathbb{N}$. We can specify the
distribution of the mark $\kappa$ associated with the point $t$ by its
conditional density function $f^*(\kappa|t)=f(\kappa|t,\hist_{t-})$,
i.e.\ this specifies the distribution of the mark $\kappa$ given $t$
and the history $\hist_{t-}$, which now includes information of both
times and marks of past events. Here the term density function is used
in a broad sense: if the mark is a continuous random variable, this is
the usual (conditional) density function, but if it is a discrete
random variable, this is its (conditional) probability function. Note
also that $f^*(\kappa|t)=f(\kappa|t,\hist_{t_n})$ if $t_n$ is the the
last point before $t$, since the additional condition that the next
point is located at $t$ means that the histories
$\hist_{t-}$ and $\hist_{t_n}$ contain the same information.

We can now define the conditional intensity function for the marked
case as 
\[
\lambda^*(t,\kappa) = \lambda^*(t) f^*(\kappa|t),
\]
where $\lambda^*(t)$ is called the {\em ground intensity}, and is
defined exactly as the conditional intensity function for the unmarked
case, except that it is allowed to depend on the marks of the past
events also; note the close resemblance of this formula with
$p(x,y)=p(x)p(y|x)$ for the relation between the joint, marginal and
conditional distributions for random variables. Thus we can rewrite this expression to
\[
\lambda^*(t,\kappa) = \lambda^*(t) f^*(\kappa|t) =
\frac{f(t|\hist_{t_n})f^*(\kappa|t)}{1-F(t|\hist_{t_n})} =
\frac{f(t,\kappa|\hist_{t_n})}{1-F(t|\hist_{t_n})},
\]
where $f(t,\kappa|\hist_{t_n})$ is the joint density of the time and
the mark (again the word the density is used in a broad sense)
conditional on past times and marks, and $F(t|\hist_{t_n})$ is the
conditional cumulative distribution function of $t$ also conditional
on the past times and marks. Therefore following the same arguments as
in Section~\ref{sec.cif}, the conditional intensity function
$\lambda^*(t,\kappa)$ can now be interpreted for the case of discrete
marks by
\begin{eqnarray*}
\lambda^*(t,\kappa)\dd t = \E[N(\dd t \times \kappa)|\hist_t],
\end{eqnarray*}
that is, the mean number of points in a small time interval $\dd t$
with the mark $\kappa$. Similarly for the continuous case,
\begin{eqnarray*}
\lambda^*(t,\kappa)\dd t\dd \kappa = \E[N(\dd t \times \dd \kappa)|\hist_t],
\end{eqnarray*}
that is, the mean number of points in a small time interval $\dd t$
with the mark in a small interval $\dd\kappa$.

We revisit the Hawkes process from Example~\ref{ex.haw}, now with
marks:

\begin{example}[marked Hawkes process]\label{ex.etas}
  The ETAS (epidemic type aftershock sequence) model is a particular
  type of marked Hawkes process for modelling earthquakes times and
  magnitudes. Here $\kappa_i\in[0,\infty)$ denotes the magnitude of an
  earthquake occurring at time $t_i$. In its simplest form the ETAS
  model can be defined by its ground intensity
  \[
  \lambda^*(t) = \mu +
  \alpha\sum_{t_i<t}e^{\beta\kappa_i}e^{-\gamma(t-t_i)},
  \]
  where $\alpha,\beta,\gamma>0$ are parameters, and an exponential
  distribution as its mark density
  \[
  f^*(\kappa|t) = \delta e^{-\delta \kappa}.
  \]
  Equivalently we could define it by its conditional intensity
  function including both marks and times
  \[
  \lambda^*(t,\kappa) = \left(\mu +
  \alpha\sum_{t_i<t}e^{\beta\kappa_i}e^{-\gamma(t-t_i)}\right) 
  \delta e^{-\delta \kappa}.
  \]
  The idea behind using this model is that earthquakes cause
  aftershocks - this is reflected in the fact that every new
  earthquake increases the intensity by $\alpha
  e^{\beta\kappa_i}$. Note that large earthquakes increase the
  intensity more than small earthquakes. For more on the ETAS model,
  see e.g.\ \cite{ogata-88,ogata-98}.
\end{example}

We sometimes make simplifying independence assumptions on the
marks. An {\em unpredictable mark} is a mark that does not depend on
the past (and therefore cannot be ``predicted'' using the information
about the past, hence the term
``unpredictable''). Example~\ref{ex.etas} has unpredictable marks,
since $f^*(\kappa|t)$ does not depend on the past. An even stronger
assumption is that of an {\em independent mark}, which means that
$\kappa_i$ is independent of everything else except maybe
$t_i$. Example~\ref{ex.etas} does not have independent marks, since
the ground intensity depends on the past marks (which is just another
way of saying that the marks depend on the future events).

\section{Inference}

There are many possibilities for estimating the parameters in a
process specified by a conditional intensity function. The likelihood
function for such a process has a fairly simple expression, which
usually means that maximum likelihood inference or Bayesian inference
are good choices.

\subsection{Likelihood function}

Assume that we have observed a point pattern $(t_1,\ldots,t_n)$ on
$[0,T)$ for some given $T>0$, and if we are in the marked case, also
its accompanying marks $(\kappa_1,\ldots,\kappa_n)$. Furthermore, let the {\em integrated conditional intensity function} (or integrated ground intensity function in the marked case) be given by
\[
\Lambda^*(t) = \int_0^t \lambda^*(s) \dd s.
\]
Then the likelihood function is given by the following proposition.

\begin{proposition}\label{prop.lik}
  Given an unmarked point pattern $(t_1,\ldots,t_n)$ on an observation
  interval $[0,T)$, the likelihood function is given by
  \[
  L = \left( \prod_{i=1}^n \lambda^*(t_i) \right) \exp
  (-\Lambda^*(T)).
  \]
  Given a marked point pattern $((t_1,\kappa_1),\ldots,(t_n,\kappa_n))$
  on $[0,T)\times\mathbb{M}$, the likelihood function is given by
  \[
  L = \left( \prod_{i=1}^n \lambda^*(t_i,\kappa_i) \right) \exp
  (-\Lambda^*(T)).
  \] 
\end{proposition}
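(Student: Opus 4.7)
The plan is to compute the joint density of the observed pattern by chaining together the interevent time densities from Proposition~\ref{prop.fstar}, and then appending the survival factor corresponding to observing no further events in the interval $(t_n, T)$.

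First, I would treat the unmarked case. Setting $t_0 = 0$, I would write the joint density of $(t_1, \ldots, t_n)$ together with the event $\{\text{no points in }(t_n, T)\}$ as
\[
L \;=\; \left(\prod_{i=1}^n f(t_i \mid \hist_{t_{i-1}})\right)\bigl(1 - F(T\mid \hist_{t_n})\bigr),
\]
by the factorisation of the joint density into successive conditional densities introduced in Section~2.2. Then I would substitute the two identities of Proposition~\ref{prop.fstar}, namely $f(t_i\mid\hist_{t_{i-1}}) = \lambda^*(t_i)\exp\!\bigl(-\int_{t_{i-1}}^{t_i}\lambda^*(s)\,\dd s\bigr)$ and $1-F(T\mid\hist_{t_n}) = \exp\!\bigl(-\int_{t_n}^T\lambda^*(s)\,\dd s\bigr)$. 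The exponents telescope: the sum $\sum_{i=1}^n\int_{t_{i-1}}^{t_i}\lambda^*(s)\,\dd s + \int_{t_n}^T\lambda^*(s)\,\dd s$ collapses to $\int_0^T\lambda^*(s)\,\dd s = \Lambda^*(T)$, yielding the claimed formula.

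For the marked case, I would use the factorisation $\lambda^*(t,\kappa) = \lambda^*(t)\,f^*(\kappa\mid t)$ from Section~2.4 and observe that the joint density of the pattern is
\[
L \;=\; \left(\prod_{i=1}^n f(t_i\mid\hist_{t_{i-1}})\,f^*(\kappa_i\mid t_i)\right)\bigl(1 - F(T\mid\hist_{t_n})\bigr),
\]
where now the histories also carry the marks. Applying the unmarked computation to the time part and then absorbing each $f^*(\kappa_i\mid t_i)$ into the corresponding $\lambda^*(t_i)$ to form $\lambda^*(t_i,\kappa_i)$ gives the result; the exponential factor is unchanged because the ground integrated intensity $\Lambda^*(T)$ does not involve the marks (the marks integrate out to $1$).

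I do not anticipate any real obstacle beyond bookkeeping. The only subtle point is the inclusion of the survival factor $1 - F(T\mid\hist_{t_n})$: one has to be careful that the likelihood for a pattern observed on a bounded window $[0,T)$ is not just the product of the interevent densities but also encodes the information that no event was observed after $t_n$ up to time $T$. Once this is acknowledged, the telescoping of the exponents into $\Lambda^*(T)$ is the key clean step that produces the compact formula.
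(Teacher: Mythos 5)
Your proposal is correct and follows essentially the same route as the paper's own proof: the same factorisation into successive conditional densities times the survival factor $1-F(T|\hist_{t_n})$, the same substitution of Proposition~\ref{prop.fstar}, the same telescoping of the exponents, and the same absorption of the mark densities in the marked case. The only cosmetic difference is that you invoke (\ref{eq.Fstar}) directly for the survival factor where the paper first rewrites it as $f(T|\hist_{t_n})/\lambda^*(T)$ via (\ref{eq.int}); this changes nothing of substance.
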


\begin{proof}
  The likelihood function is the joint density function of all the points in
  the observed point pattern $(t_1,\ldots,t_n)\in [0,T)$, and can
  therefore be factorised into all the conditional
  densities of each points given all points before it. This yields
  \begin{eqnarray*}
    L = f(t_1|\hist_0) f(t_2|\hist_{t_1}) \cdots f(t_n|\hist_{t_{n-1}}) 
    (1-F(T|\hist_{t_n})),
  \end{eqnarray*}
  where the last term $(1-F(T|\hist_{t_n}))$ appears since the
  unobserved point $t_{n+1}$ must appear after the end of the
  observation interval, and the term $\hist_0$ contains the
  information that there are no events before time 0. Using
  (\ref{eq.int}) and (\ref{eq.fstar}), we get that
  \begin{eqnarray*}
    L &=& \left(\prod_{i=1}^n f(t_i|\hist_{t_{i-1}})\right) 
    \frac{f(T|\hist_{t_n})}{\lambda^*(T)}\\
    &=& \left(\prod_{i=1}^n \lambda^*(t_i) \exp
      \left(-\int_{t_{i-1}}^{t_i} \lambda^*(s) \dd s \right)\right) 
    \exp\left(-\int_{t_n}^T \lambda^*(s) \dd s \right)\\
    &=& \left(\prod_{i=1}^n\lambda^*(t_i)\right) \exp
    \left(-\int_0^T \lambda^*(s) \dd s \right),
  \end{eqnarray*}
  where $t_0=0$. This proves the result for the unmarked case. To
  obtain the result for the marked case, start by the factorisation
  \begin{eqnarray*}
    L &=& f(t_1|\hist_{t_0})f(\kappa_1|t_1,\hist_{t_0}) \cdots 
    f(t_n|\hist_{t_{n-1}})f(\kappa_n|t_n,\hist_{t_{n-1}})
    (1-F(T|\hist_{t_n}))
  \end{eqnarray*}
  All the terms except the conditional mark densities 
  $f(\kappa_i|t_i,\hist_{t_{i-1}})=f^*(\kappa_i|t_i)$ are the same as
  in the unmarked case, so
  \begin{eqnarray*}
    L &=& \left(\prod_{i=1}^n f^*(\kappa_i|t_i)\right)
    \left(\prod_{i=1}^n\lambda^*(t_i)\right) \exp
    \left(-\int_0^T \lambda^*(s) \dd s \right) \\
    &=& \left(\prod_{i=1}^n\lambda^*(t_i,\kappa_i)\right) \exp
    \left(-\int_0^T \lambda^*(s) \dd s \right),
  \end{eqnarray*}
  which establishes the result for the marked case.
\end{proof}

\subsection{Estimation}

Although Proposition~\ref{prop.lik} gives an explicit expression for
the likelihood function, it is rarely simple enough that we can find
the maximum likelihood estimate (MLE) analytically. One special case
where we can find the MLE is the homogeneous Poisson process:

\begin{example}[MLE for the homogeneous Poisson process]
  For the homogeneous Poisson process with intensity $\lambda^*(t) =
  \lambda$ observed on an interval $[0,T)$ for some $T>0$, the likelihood simplifies to
  \[
  L = \lambda^n \exp(-\lambda T).
  \]
  Differentiating this and equating to zero, we get that the MLE is
  given by
  \[
  \hat\lambda = \frac{n}{T}.
  \]
  Note that this expression does not depend on the times of the
  points, only the total number of points. However, this is
  not true for other processes.
\end{example}

For most other point processes we will require numerical methods to
obtain estimates, such as Newton-Raphson for maximizing the
likelihood, or Markov chain Monte Carlo for approximating the
posterior in a Bayesian approach.

\section{Simulation}\label{sec.sim}

Simulation turns out to be fairly easy when the conditional intensity
function is specified. The conditional intensity function leads to
two different approaches for simulating a point process: The inverse
method and Ogata's modified thinning algorithm. Both are
generalisations of similar methods for simulation of inhomogeneous
Poisson processes.

\subsection{Inverse method}\label{sec.inv}

The basic idea in the inverse method is that we simulate a unit-rate
Poisson process (this is just a series of independent exponential
random variables with mean one) and transform these into the desired
point process using the integrated conditional intensity function. The following proposition is the key result behind this method.

\begin{proposition}\label{prop.inverse}
  If $(s_i)_{i\in\mathbb{Z}}$ is a unit rate Poisson process on
  $\mathbb{R}$, and $t_i=\Lambda^{*-1}(s_i)$, then $(t_i)_{i\in\mathbb{Z}}$ is
  a point process with intensity $\lambda^*(t_i)$.
\end{proposition}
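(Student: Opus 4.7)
The plan is to show, conditional on the history $\hist_{t_n}$, that the cumulative distribution function of $t_{n+1}$ agrees with the formula~(\ref{eq.Fstar}) from Proposition~\ref{prop.fstar}; by the one-to-one correspondence between conditional intensity and conditional distribution of the next event, this identifies the intensity of the transformed process as $\lambda^*$.

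First I would set up the key identity: conditional on $\hist_{t_n}$, the map $\Lambda^*$ is a deterministic, strictly increasing, absolutely continuous function on $[t_n,\infty)$ (by assumption~1 of Proposition~\ref{prop.defuni} it has derivative $\lambda^*\geq 0$; strict monotonicity on intervals where $\lambda^*>0$ suffices on the support of the process). In particular, $\Lambda^{*-1}$ is well-defined on $[\Lambda^*(t_n),\infty)$ and satisfies $\Lambda^*(t_n)=s_n$, because $t_n=\Lambda^{*-1}(s_n)$.

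Next I would compute the conditional survival function of $t_{n+1}$. Since $(s_i)$ is a unit-rate Poisson process, the increments $s_{n+1}-s_n$ are i.i.d.\ $\Exp(1)$, independent of the past. Because $\Lambda^*$ is increasing,
\[
\p(t_{n+1}>t \mid \hist_{t_n}) = \p(\Lambda^{*-1}(s_{n+1})>t \mid \hist_{t_n}) = \p(s_{n+1}-s_n > \Lambda^*(t)-\Lambda^*(t_n) \mid \hist_{t_n}).
\]
Applying the $\Exp(1)$ tail, this equals $\exp\bigl(-\int_{t_n}^{t}\lambda^*(s)\dd s\bigr)$, which is exactly $1-F(t\mid\hist_{t_n})$ in equation~(\ref{eq.Fstar}). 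Differentiating (and using Proposition~\ref{prop.fstar}) gives $f(t\mid\hist_{t_n})=\lambda^*(t)\exp(-\int_{t_n}^t\lambda^*(s)\dd s)$, so the process $(t_i)$ inherits $\lambda^*$ as its conditional intensity function by~(\ref{eq.int}).

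The main obstacle, or at least the subtle point to be careful about, is the circular look of $t_i=\Lambda^{*-1}(s_i)$: the function $\Lambda^*$ depends on the past points $t_1,\ldots,t_{i-1}$, so one must verify that the recursion is consistent and that, conditional on $\hist_{t_n}$, the continuation $\Lambda^*|_{[t_n,\infty)}$ is measurable in the past only. I would emphasize that the transformation is done sequentially: having defined $t_1,\ldots,t_n$, the function $\Lambda^*$ on $[t_n,\infty)$ is fixed (as a functional of $\hist_{t_n}$), and $s_{n+1}-s_n$ is independent of $\hist_{t_n}$, which legitimises the conditional probability computation above. A brief remark on existence (condition~2 of Proposition~\ref{prop.defuni} guarantees $\Lambda^{*-1}(s_{n+1})$ is finite, or else that the process terminates when $\Lambda^*$ has finite limit, corresponding to a terminating Poisson increment) rounds out the argument.
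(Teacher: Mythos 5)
Your proposal is correct and follows essentially the same route as the paper's proof: condition on the history $\hist_{t_n}$, use that the Poisson increment $s_{n+1}-s_n$ is $\Exp(1)$ and independent of the past, and match the resulting distribution of $t_{n+1}=\Lambda^{*-1}(s_{n+1})$ with~(\ref{eq.Fstar}) via $s_n=\Lambda^*(t_n)$. Your added remarks on the sequential consistency of the recursion and the well-definedness of $\Lambda^{*-1}$ make explicit the inductive structure that the paper states more briefly.
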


\begin{proof}
We prove this by induction, so assume that for $i\leq n$, $s_i$ follows a unit rate Poisson process, and $t_i$ follows a point process with intensity $\lambda^*$. Now consider the next point in both processes, say $S_{n+1}$ and $T_{n+1} = \Lambda^*(S_{n+1})$. Letting $S=S_{n+1}-s_n$ follow a unit rate exponential distribution which is independent of everything else, we need to prove that $T_{n+1}$ follows a point process with intensity $\lambda^*$ or equivalently has the correct distribution function $F(\cdot|\hist_{t_n})$. Denoting the distribution function of $T_{n+1}$ by $F_{T_{n+1}}(t|\hist_{t_n})$, we get that
\begin{eqnarray*}
F_{T_{n+1}}(t|\hist_{t_n})
&=& \p(T_{n+1} \leq t | \hist_{t_n})\\
&=& \p(\Lambda^{*-1}(S+s_n) \leq t | \hist_{t_n})\\
&=& \p(S \leq \Lambda^*(t)-s_n | \hist_{t_n})\\
&=& 1 - \exp(-(\Lambda^*(t)-s_n))\\
&=& 1 - \exp(-(\Lambda^*(t)-\Lambda^*(t_n)))\\
&=& 1-\exp\left(-\int_{t_{n}}^t\lambda^*(u) \dd u\right)\\
&=& F(t|\hist_{t_n}),
\end{eqnarray*}
where we have used that $s_n=\Lambda^*(t_n)$ in the fifth equality, and 
(\ref{eq.Fstar}) in the last one. Thus $T_{n+1}$ follows the correct distribution.
\end{proof}

Although the point process is defined on the whole of $\mathbb{R}$ in
Theorem \ref{prop.inverse}, this condition can be relaxed. If we
instead use a Poisson process with $s_i\in[0,T]$, then we get a new
point process with $t_i\in[0,\Lambda^{*-1}(T)]$, i.e.\ we also need to
transform the final end point. This means we cannot simply simulate a
Poisson process on the interval needed, since this interval changes
during the transformation, so we need to simulate one exponential
variable at a time, and then transform them to see if our simulation
fills out the whole interval. The following algorithm does this.

\begin{algorithm}\label{algo.inv}{\bf (Simulation by inversion)}
  \begin{enumerate}
  \item Set $t=0$, $t_0=0$ and $n=0$ (note that $t_0$ is not an event).
  \item Repeat until $t>T$:
    \begin{enumerate}
    \item Generate $s_n\sim\Exp(1)$.
    \item Calculate $t$, where $t = \Lambda^{*-1}(s_n)$.
    \item If $t < T$, set $n=n+1$ and $t_n=t$.
    \end{enumerate}
  \item Output is $\{t_1,\ldots,t_n\}$.
  \end{enumerate}
\end{algorithm}

The difficult part of this algorithm is of course calculating
$t$ in step 2(b) since this requires finding the inverse of the
integrated conditional intensity function. Notice that since
$\lambda^*$ is non-negative, we get that $\Lambda^*$ is
non-decreasing. Strictly speaking, this means that $\Lambda^*$ may
not even be an invertible function, since it can be constant on
intervals (corresponding to $\lambda^*$ being zero in these
intervals). However, any point $s_i$ from the Poisson process will hit
these points with probability zero, so we never need to evaluate
$\Lambda^{*-1}$, where it is not well-defined.

\begin{example}[Hawkes process, Inverse method]
  We revisit the special case of Hawkes process from
  Example~\ref{ex.haw} given by (\ref{eq.hawexp}). For this we
  get the integrated conditional intensity function
  \[
  \Lambda^*(t) = \mu t + \alpha \sum_{t_i<t}
  \left(1-e^{-(t-t_i)}\right).
  \]
  Looking at the expression, it seems to be hard solve this with
  respect to $t$, so an analytical expression for $\Lambda^{*-1}$
  is not available, meaning we will need to approximate this when we
  use Algorithm~\ref{algo.inv}. A simple way of doing this is to
  calculate $\tilde s_i=\Lambda^*(\tilde t_i)$ starting at very small
  values of $\tilde t_i$ and then increase $\tilde t_i$ until
  $s_i\approx\Lambda^*(\tilde t_i)$, and then use $t_i=\tilde t_i$.
\end{example}

The easiest way to generalise this to the marked case is to simulate
the associated mark to an event $t_i$ just after we have transformed
$s_i$ to $t_i$ (notice that we have all the information that this may
depend on, since we have already simulated the past events and marks).

\subsection{Ogata's modified thinning algorithm}

Ogata's modified thinning algorithm \citep{ogata-81} is a thinning
algorithm based on simulating homogeneous Poisson processes with too
high intensities and then thin out the points that are too many
according to the conditional intensity function. Since the
conditional intensity function depends on the past, we have to do this
starting in the past and follow the direction of time.

The basic idea behind the algorithm is that when we are at time $t$ we
need to find out where to place the next point $t_i>t$. To do this we
simulate a homogeneous Poisson process on some interval $[t,t+l(t)]$ for
some chosen function $l(t)$ (this is the maximum distance we may go
forward in time from $t$ and it may be infinite). This Poisson process
has a chosen constant intensity on $[t,t+l(t)]$, which fulfills
\begin{equation}\label{eq.m}
  m(t)\geq\sup_{s\in[t,t+l(t)]}\lambda^*(s).
\end{equation}
Actually we only need to simulate the first point $t_i$ of this Poisson
process. There are now two possibilities: If $t_i>l(t)$, then there is
no point in $[t,t+l(t)]$, so we start again from $t+l(t)$, but if
$t_i\leq l(t)$, there may be a point at $t_i$ in $[t,t+l(t)]$. In the
latter case we need to figure out whether to keep this point or
not. To get the correct intensity, we keep it with probability
$\lambda^*(t_i)/m(t)$. Whether or not we keep it, we start all over at
$t_i$. 

\begin{algorithm}\label{algo.ogata}({\it Ogata's modified thinning
    algorithm.})
\begin{enumerate}
\item Set t=0 and n=0.
\item Repeat until $t>T$:
  \begin{enumerate}
  \item Compute $m(t)$ and $l(t)$.
  \item Generate independent random variables $s \sim \Exp(m(t))$ and
    $U \sim \text{Unif}([0,1])$.
  \item If $s>l(t)$, set $t = t+l(t)$.
  \item Else if $t+s>T$ or $U>\lambda^*(t+s)/m(t)$, set $t = t+s$.
  \item Otherwise, set $n = n+1$,
    $t_n = t+s$, $t = t+s$.
  \end{enumerate}
\item Output is $\{t_1,\ldots,t_n\}$.
\end{enumerate}
\end{algorithm}

\begin{proposition}\label{prop.ogata}
  The output of Algorithm~\ref{algo.ogata} is a realisation of a
  point process with conditional intensity function $\lambda^*(t)$.
\end{proposition}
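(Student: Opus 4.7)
The plan is to fix the history $\hist_{t_n}$, so that $\lambda^*$ becomes a deterministic function of $t>t_n$, and show by induction on $n$ that conditional on $\hist_{t_n}$ the next accepted point $t_{n+1}$ produced by the algorithm has survival function
\[
\p(t_{n+1}>t \mid \hist_{t_n}) = \exp\left(-\int_{t_n}^t \lambda^*(s)\,\dd s\right),
\]
which by Proposition~\ref{prop.fstar} is exactly $1-F(t\mid\hist_{t_n})$; combined with Proposition~\ref{prop.defuni} this identifies the output distribution with the target point process. The base case $n=0$ (starting from $t_0=0$) is handled identically.

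The cleanest route is a hazard / thinning argument. On any single segment $[\tau,\tau+l(\tau)]$ over which the upper bound $m(\tau)\geq \sup_{s\in[\tau,\tau+l(\tau)]}\lambda^*(s)$ is held fixed, the first candidate generated in step 2(b) is $\Exp(m(\tau))$-distributed, and the accept/reject test in steps 2(d)--(e) is classical Bernoulli thinning with retention probability $\lambda^*(u)/m(\tau)$. Hence the infinitesimal acceptance hazard at any $u\in[\tau,\tau+l(\tau)]$ is
\[
m(\tau)\cdot\frac{\lambda^*(u)}{m(\tau)} = \lambda^*(u),
\]
with the bound $m(\tau)$ cancelling out. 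The sub-case $s>l(\tau)$ produces no candidate in the segment and advances deterministically to $\tau+l(\tau)$, while the clause $t+s>T$ falls outside the observation window; neither contributes an acceptance.

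The inductive/bookkeeping step is to check that after each such move---advance by $l(\tau)$, reject at $\tau+s$, or overshoot $T$---the algorithm may be restarted at the new current time $\tau'$ with a freshly chosen pair $(m(\tau'),l(\tau'))$ without disturbing the distribution of the next accepted time. This is where the memoryless property of the exponential enters: conditional on no acceptance having occurred by $\tau'$, the waiting time to the next candidate is again $\Exp(m(\tau'))$ independently of the draws so far, and the same cancellation $m(\tau')\cdot\lambda^*(u)/m(\tau')=\lambda^*(u)$ applies on the new segment. Concatenating the segments then yields $\p(\text{no acceptance in }[t_n,t]\mid\hist_{t_n})=\exp(-\int_{t_n}^t\lambda^*(s)\,\dd s)$, completing the induction.

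The main obstacle is precisely this restart argument: because $m(t)$ and $l(t)$ are chosen adaptively, the algorithm does \emph{not} literally coincide with a single inhomogeneous Poisson process of piecewise-constant rate on $[t_n,\infty)$---at each restart one may pick an $m$ bearing no relation to the previous one. The hazard cancellation together with memorylessness of the exponential is what makes the specific choice of dominating bound immaterial, and spelling this out rigorously (rather than by an informal $\dd t$ calculation) is the subtle point that needs care.
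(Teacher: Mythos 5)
Your proposal is correct and takes essentially the same route the paper does: the paper's own proof is a single sentence invoking ``independent thinning'' and deferring to the informal explanation preceding the algorithm, and your hazard-cancellation $m(\tau)\cdot\lambda^*(u)/m(\tau)=\lambda^*(u)$ together with the memorylessness argument at each restart is precisely the rigorous version of that explanation. You supply considerably more detail than the paper (which is to your credit), but there is no difference in approach and no gap.
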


\begin{proof}
  It follows from independent thinning that this process has the right
  conditional intensity function (essentially the explanation above
  the algorithm is the proof).
\end{proof}

\begin{example}[Hawkes process, Ogata's modified thinning algorithm]
  In order to use the algorithm we need to choose the $m(t)$ and
  $l(t)$, and the only requirement is that the inequality (\ref{eq.m})
  is fulfilled at any possible step of the algorithm. Since
  \[
  \lambda^*(t) = \mu + \alpha\sum_{t_i<t}\exp(-(t-t_i)),
  \]
  is non-increasing (except when new points appear), we can choose
  $m(t)=\lambda(s)$ at every starting point $s$ in the algorithm and any $t\geq s$, and
  $l(t)=\infty$. This choice can be used for any point process where
  $\lambda^*(t)$ only increases when new points arrive. So the Hawkes
  process can be simulated either by the inverse method or Ogata's
  modified thinning algorithm (but in fact there are simpler methods
  for simulating the Hawkes process, see e.g.\
  \cite{moller-rasmussen-05,moller-rasmussen-06}).
\end{example}

It is easy to generalise the algorithm to the marked case: every time
we keep a point $t_i$ in the algorithm, we should simulate its marks
from the mark distribution $f^*(\kappa_i|t_i)$ (just as for the
inverse method we have the required knowledge of the past when we need
to simulate this).

\subsection{Why simulate a point process?}\label{sec.whysim}

Simulations of point processes are useful for many things:

{\em What does a point pattern typically look like?} Simulating a point
process a couple of times for a given model and a given set of
parameters will provide valuable information on what a typical point
pattern looks. Is it clustered or regular? Is it inhomogeneous or
homogeneous? Does it look anything remotely like the data you are
going to spend the next week fitting the model to?

{\em Prediction:} Given an observed past, what does the future hold? The
specification of the conditional intensity function means that it is
easy to include the already observed past, and then simulate the
future.

{\em Model checking:} Prediction can also be used for model checking if we
only use the data in the first half of the observation interval to fit a model,
and then simulate predictions of the second half to see if this
corresponds to the second half of the observed data. Or we can use all
of the data, and compare with simulations of the whole dataset.

{\em Summary statistics:} Many quantities can be calculated explicitly from
the conditional intensity function, such as the probability of getting
no events in the next month or the mean time to the next
event. However, particularly complicated summary statistics may not be
available on closed form, but can instead be approximated by
simulation. For example, the mean number of events in a given time
interval may not be available on closed form for a complicated model,
but we can then approximate it by the average number of points in a
number of simulations.

\section{Model checking}

In addition to the model checking approaches mentioned in
Section~\ref{sec.whysim}, there is a particular kind of model checking
associated with the conditional intensity function known as residual
analysis. 

\subsection{Residual analysis}

Residual analysis \citep{ogata-88} is a type of model checking for
point processes specified by a conditional intensity function. It is
based on the reverse transformation than the one used in
Proposition~\ref{prop.inverse}.

\begin{proposition}\label{prop.resid}
  If $(t_i)_{i\in\mathbb{Z}}$ is a point process with intensity
  $\lambda^*(t_i)$, and $s_i=\Lambda^{*}(t_i)$, then
  $(s_i)_{i\in\mathbb{Z}}$ is a unit rate Poisson process.
\end{proposition}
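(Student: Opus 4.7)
The plan is to show that the transformed sequence $(s_i) = (\Lambda^*(t_i))$ has interevent times that are i.i.d.\ unit exponential random variables, which characterises a unit rate Poisson process. This is essentially the converse computation to Proposition~\ref{prop.inverse}, so the structure should mirror that proof.

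First, I would fix $n$ and condition on the history $\hist_{t_n}$, which by the monotonicity of $\Lambda^*$ is equivalent to conditioning on $(\ldots,s_{n-1},s_n)$. I want to compute the conditional distribution of the increment $S_{n+1}-s_n = \Lambda^*(t_{n+1})-\Lambda^*(t_n)$. For $u\geq 0$, using that $\Lambda^*$ is non-decreasing,
\[
\p(S_{n+1}-s_n\leq u\mid \hist_{t_n})
= \p(\Lambda^*(t_{n+1})\leq s_n+u\mid\hist_{t_n})
= \p(t_{n+1}\leq \Lambda^{*-1}(s_n+u)\mid\hist_{t_n}).
\]
Applying (\ref{eq.Fstar}) to the right-hand side and using $s_n=\Lambda^*(t_n)$, this equals $1-\exp(-(\Lambda^*(\Lambda^{*-1}(s_n+u))-s_n)) = 1-e^{-u}$. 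So $S_{n+1}-s_n\mid\hist_{t_n}\sim\Exp(1)$.

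The crucial observation is that the resulting distribution does not depend on $\hist_{t_n}$ at all, so the increment $S_{n+1}-s_n$ is \emph{independent} of $\hist_{t_n}$, hence independent of all previous increments $s_1-s_0,\ldots,s_n-s_{n-1}$. Iterating over $n$, the full sequence of increments $(s_{n+1}-s_n)_{n\in\mathbb{Z}}$ is i.i.d.\ $\Exp(1)$, which is the standard characterisation of a unit rate Poisson process on $\mathbb{R}$.

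The main obstacle is the non-invertibility issue flagged in the paper: $\Lambda^*$ can be flat on intervals where $\lambda^*$ vanishes, so $\Lambda^{*-1}$ is not literally a function. This is handled by the same remark as before Algorithm~\ref{algo.inv}: $\Lambda^*$ is strictly increasing on any interval where $\lambda^*>0$, the observed points $t_i$ avoid the flat intervals almost surely, and one can interpret $\Lambda^{*-1}(s_n+u)$ as $\sup\{t:\Lambda^*(t)\leq s_n+u\}$ so that the event $\{\Lambda^*(t_{n+1})\leq s_n+u\}$ still coincides with $\{t_{n+1}\leq\Lambda^{*-1}(s_n+u)\}$ up to a null set. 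Once this technicality is dispatched, the identity $F(\Lambda^{*-1}(s_n+u)\mid\hist_{t_n})=1-e^{-u}$ is immediate from (\ref{eq.Fstar}), and the proposition follows.
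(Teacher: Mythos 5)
Your proof is correct and takes exactly the route the paper intends: the paper's own proof is just the one-line remark that the result ``is proved in a similar manner as Proposition~\ref{prop.inverse},'' and your argument is precisely that mirrored computation, carried out in the reverse direction via (\ref{eq.Fstar}) and the independence of the increment from $\hist_{t_n}$. In fact you supply more detail than the paper does, including the correct handling of the flat intervals of $\Lambda^*$.
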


\begin{proof}
	This is proved in a similar manner as Proposition~\ref{prop.inverse}.
\end{proof}

Thus if a point pattern is a realization of a point process with
conditional intensity function $\lambda^*$, then the integrated
conditional intensity function will transform the pattern into a
realization of a unit rate Poisson process. In practice this means
that if we have modelled an observed point pattern with a point
process, and the type of point process is well-chosen, then the
transformed pattern should closely resemble a unit-rate Poisson
process. In other words, the model checking boils down to checking
whether the interevent times are independent exponential variables
with mean one.

If the model does not fit, residual analysis may provide important
information on how it does not fit. For example, if the data contains
an unrealistically large gap for the model between $t_i$ and
$t_{i+1}$, then the transformed data will contain a large gap between
$s_i$ and $s_{i+1}$, i.e.\ $s_{i+1}-s_i$ will be to large to
realistically come from a unit rate exponential
distribution. A bit of creativity in analysing the residuals can give us
all kinds of information about the original point pattern.

\section{Concluding remarks}

We have now seen that the conditional intensity function is a valuable
tool for point process modelling, and can be used at all stages of
data analysis:
\begin{itemize}
\item Preliminary analysis (simulation of potential models)
\item Model specification and interpretation.
\item Parameter estimation (maximum likelihood or Bayesian estimation).
\item Model checking (residual analysis or simulation based
  approaches).
\item Prediction.
\end{itemize}
However, we should note that basing parameter estimation and model
checking on the same functions of the data is usually considered bad
practice. For example, if we fit a model using maximum likelihood
estimation, we have essentially fitted the conditional intensity
function as well as we can, and it should not come as a surprise if
the residuals fit rather well, since they are also based on the
conditional intensity function. Here it would be more appropriate to
base the model checking on other aspects of the model (such as the
summary statistics given for example in
\cite{moller-waagepetersen-04}), which may not be caught so well by the
conditional intensity function.

\bibliographystyle{natbib}
\bibliography{bibliography}

\end{document}